\newcommand{\thSAT}{{\rm 3SAT}}
\newcommand{\st}{\mathrel{:}}
\newcommand{\DTIME}{{\rm DTIME}}
\newcommand{\NTIME}{{\rm NTIME}}
\newcommand{\YES}{{\rm YES}}
\newcommand{\NO}{{\rm NO}}
\newcommand{\GCE}{{\rm GCE}}
\newcommand{\GC}{{\rm GC}}
\newcommand{\xbar}{\overline{x}}
\newcommand{\xonebar}{\overline{x}_1}
\newcommand{\xtwobar}{\overline{x}_2}
\newcommand{\xthreebar}{\overline{x}_3}
\newcommand{\xfourbar}{\overline{x}_4}
\newcommand{\xibar}{\overline{x}_i}
\newcommand{\xnbar}{\overline{x}_n}
\newcommand{\OBS}{{\rm OBS}}
\newcommand{\COL}{{\rm COL}}
\newcommand{\penp}{\hbox{P$=$NP}}
\newcommand{\SAT}{{\rm SAT}}
\newcommand{\NP}{{\rm NP}}
\renewcommand{\P}{{\rm P}}
\newcommand{\N}{{\mathbb{N}}}
\newcommand{\es}{\emptyset}
\newcommand{\GNM}{G_{N,M}}
\newcommand{\GNN}{G_{N,N}}
\newcounter{savenumi}
\newtheorem{theoremfoo}{Theorem}[section] 
\newenvironment{theorem}{\pagebreak[1]\begin{theoremfoo}}{\end{theoremfoo}}
\newtheorem{lemmafoo}[theoremfoo]{Lemma}
\newenvironment{lemma}{\pagebreak[1]\begin{lemmafoo}}{\end{lemmafoo}}
\newtheorem{conjecturefoo}[theoremfoo]{Conjecture}
\newtheorem{conventionfoo}[theoremfoo]{Convention}
\newtheorem{porismfoo}[theoremfoo]{Porism}
\newtheorem{gamefoo}[theoremfoo]{Game}
\newtheorem{corollaryfoo}[theoremfoo]{Corollary}
\newtheorem{openfoo}[theoremfoo]{Open Problem}
\newtheorem{exercisefoo}{Exercise}
\newcommand{\fig}[1] 
{
\begin{figure}
\begin{center}
\input{#1}
\end{center}
\end{figure}
}
\newtheorem{potanafoo}[theoremfoo]{Potential Analogue}
\newtheorem{potthmfoo}[theoremfoo]{Potential Theorem}
\newtheorem{notefoo}[theoremfoo]{Note}
\newtheorem{notabenefoo}[theoremfoo]{Nota Bene}
\newtheorem{nttn}[theoremfoo]{Notation}
\newtheorem{empttn}[theoremfoo]{Empirical Note}
\newtheorem{examfoo}[theoremfoo]{Example}
\newtheorem{dfntn}[theoremfoo]{Def}
\newenvironment{definition}{\pagebreak[1]\begin{dfntn}\rm}{\end{dfntn}}
\newtheorem{propositionfoo}[theoremfoo]{Proposition}
\newenvironment{proof}
   {\pagebreak[1]{\narrower\noindent {\bf Proof:\quad\nopagebreak}}}{\QED}
\newcommand{\yyskip}{\penalty-50\vskip 5pt plus 3pt minus 2pt}
\newcommand{\blackslug}{\hbox{\hskip 1pt
       \vrule width 4pt height 8pt depth 1.5pt\hskip 1pt}}
\newcommand{\QED}{{\penalty10000\parindent 0pt\penalty10000
       \hskip 8 pt\nolinebreak\blackslug\hfill\lower 8.5pt\null}
       \par\yyskip\pagebreak[1]}
\newcommand{\BBB}{{\penalty10000\parindent 0pt\penalty10000
       \hskip 8 pt\nolinebreak\hbox{\ }\hfill\lower 8.5pt\null}
       \par\yyskip\pagebreak[1]}
\newtheorem{factfoo}[theoremfoo]{Fact}
\begin{document}

\centerline{\bf The Complexity of Grid Coloring}

\centerline{\bf by} 

\centerline{Daniel Apon\footnote{University of Maryland at College Park, MD, 20742
              {\tt dapon.crypto@gmail.com}}}

\centerline{William Gasarch\footnote{University of Maryland at College Park, MD, 20742 {\tt gasarch@umd.edu}}}

\centerline{Kevin Lawler\footnote{\tt kevin@permenentco.com}}

\begin{abstract}
A $c$-coloring of the grid $\GNM=[N]\times [M]$ is a mapping of $\GNM$ into
$[c]$ such that no four corners forming a rectangle have the same color.
In 2009 a challenge was proposed to find a 4-coloring of $G_{17,17}$.
Though a coloring was produced, finding it proved to be difficult.
This raises the question of whether there is some complexity lower bound.
Consider the following problem:
given a partial $c$-coloring of the  $\GNM$ grid, can it be extended
to a full $c$-coloring? We show that this problem is NP-complete.
We also give a Fixed Parameter Tractable algorithm for this problem with parameter $c$.
\end{abstract}

\section{Introduction}

\begin{definition}
\begin{enumerate}
\item
If $x\in\N$ then $[x]$ denotes the set
$\{1,\ldots,x\}$.
$\GNM$ is the set $[N]\times[M]$.
\end{enumerate}
\end{definition}

\begin{definition}
Let $n,m,c\ge 1$. 
\begin{enumerate}
\item
A \emph{rectangle of $\GNM$} is a subset
of the form
$$\{(a,b),(a+d_1,b),(a+d_1,b+d_2),(a,b+d_2)\}\subseteq \GNM$$
for some $a, b, d_1, d_2 \in \N$
with $d_1,d_2\ge 1$ 
Note that we are only looking at the four corners of the rectangle---nothing else.
\item
Let $\chi: \GNM\rightarrow [c]$. A {\it monochromatic rectangle}
is a rectangle where all 4 elements of it are colored the same. 
\item
Let $\chi: \GNM\rightarrow [c]$. 
If there are no rectangles with all four corners the same color
then we call $\chi$ a {\it $c$-coloring}.
If the $c$ is understood we may just say \emph{a coloring}. 
We sometimes use the term {\it proper $c$-coloring} rather than {\it $c$-coloring}
to stress the fact that it has no monochromatic rectangles.
\item 
A grid $\GNM$ is \emph{$c$-colorable} if there is a $c$-coloring of it. 
\end{enumerate}
\end{definition}

Fenner et al.~\cite{grid} explored the following problem:

\centerline{\it Which grids are $c$-colorable for a given fixed $c$?}

\section{History and Our Results} 

We state some of the results of Fenner et al.~\cite{grid}.

\begin{enumerate}
\item
For all $c\ge 2$, $G_{c+1,c\binom{c+1}{2}+1}$ is not $c$-colorable.
\item
For all $c$ there exists a finite number of grids, denoted $\OBS_c$,  such that
$\GNM$ is $c$-colorable iff it doesn't contain any element of $\OBS_c$. 
$\OBS$ stands for {\it the obstruction set}.
\item
$\OBS_2 = \{ G_{3,7}, G_{5,5}, G_{7,3} \}$.
This was obtained without the aid of a computer program.
\item
$\OBS_3 = \{G_{19,4}, G_{16,5}, G_{13,7}, G_{11,10}, G_{10,11}, G_{7,13}, G_{5,16}, G_{4,19} \}$.
A computer aided search was used to find a 3-coloring of $G_{10,10}$.
\item
$$\OBS_4=
\{ G_{41,5},
   G_{31,6},
   G_{29,7},
   G_{25,9},
   G_{23,10},
   G_{22,11},
   G_{21,13},
   G_{19,17} \} \bigcup $$

$$\{
   G_{17,19},
   G_{13,21},
   G_{11,22},
   G_{10,23},
   G_{9,25},
   G_{7,29},
   G_{6,31},
   G_{5,41}\}$$
The authors were stuck for a long time trying to find 4-colorings of
$G_{17,17}$, $G_{17,18}$, $G_{18,18}$, $G_{12,21}$, and $G_{10,22}$
(we omit the symmetric cases which follow automatically, i.e., if there is a
4-coloring of $G_{22,10}$ then there is one for and $G_{10,22}$).
They believed these were all 4-colorable.
William Gasarch put a bounty of $17^2=289$ dollars for
a 4-coloring of $G_{17,17}$ and posted this challenge to ComplexityBlog~\cite{Gasarch-2009}.
Bernd Steinbach and
Christian Posthoff
found 4-colorings of $G_{17,17}$, $G_{18,18}$, and $G_{12,21}$ and
received the reward.
Brad Larsen found a 4-coloring of $G_{22,10}$.
Brad Larsen posted the 4-coloring
saying he used a $\SAT$-solver but he did not elaborate.
Steinbach and Posthoff published their results and their methods.
In brief, they used a very deep analysis that allowed for a strong reduction of the problem,
and then used the Universal SAT-Solver clasp.
See their articles~\cite{SP-2012a,SP-2012b,SP-2012c,SP-2013b,SP-2013a,SP-2015} and
a book edited by Steinbach~\cite{Steinbach-2014} that has several chapters
explaining how they found a 4-coloring of $G_{12,21}$ in detail.
These results completed the search for $\OBS_4$. 
\item
Finding $\OBS_5$ seems to be beyond current technology.
\end{enumerate}

The difficulty of 4-coloring $G_{17,17}$ and pinning down $\OBS_5$ raise
the following question: is the problem of grid coloring hard?
In Section~\ref{se:gce} we define the {\it Grid Coloring Extension Problem}.
In Section~\ref{se:npc} we show this problem is $\NP$-complete.
Does this really indicate that  4-coloring is hard?
In Section~\ref{se:no} we discuss the issue.
In Section~\ref{se:fpt} we show that the grid coloring extension problem is
Fixed Parameter Tractable with parameter $c$.
Does this really give a way to find extensions quickly?
In Section~\ref{se:real} we discuss this issue. 
In Section~\ref{se:open} we present open problems.

\section{Definition of the Grid Coloring Extension Problem}\label{se:gce}

\begin{definition}
Let $N,M,c\in\N$.
\begin{enumerate}
\item
Given a grid $\GNM$, a {\it cell} is an element $(i,j)\in \GNM$. 
\item
A \emph{partial $c$-coloring $\chi$ of $\GNM$} is
a mapping of a subset of $\GNM$ to $[c]$ with no monochromatic
rectangle on the points where it is defined. 
See Figure~\ref{fig:partialcol} for an example.
\item
If $\chi$ is a partial $c$-coloring of $\GNM$ then
$\chi'$ is \emph{an extension of $\chi$} if
$\chi'$ is a partial $c$-coloring of $\GNM$ which
\begin{enumerate}
\item
is defined on every cell that $\chi$ is defined,
\item 
agrees with $\chi$ on those cells,
\item 
may be defined on more cells, and 
\item
has no monochromatic rectangles. 
\end{enumerate}
\item
A \emph{total mapping $\chi$ of $\GNM$ to $[c]$} is
a mapping of $\GNM$ to $[c]$.  This would normally just
be called a mapping, but we use the term total to distinguish it
from a partial mapping.
\end{enumerate}
\end{definition}

\begin{figure}
\begin{large}
\[
\begin{array}{|c||c|c|c|c|c|c|c|c|c|c|}
\hline
R &    & &   &   &   & & &  \cr
\hline
R &   &&  &   &   & & &  \cr
\hline
B     &   & R &   &   &   & & &  \cr
\hline
R &   & &   & B &   & & &  \cr
\hline
R &   & &   &   & B & & &  \cr
\hline
R &   & &   &   &   & & &  \cr
\hline
\hline
R & R &R& R & R &R  &R&R&R \cr
\hline
\end{array}
\]
\caption{Example of a partial coloring of $G_{9,7}$}
\label{fig:partialcol}
\end{large}
\end{figure}

\begin{definition}
$\GCE$ is the following problem:
\begin{itemize}
\item
{\it Input}  $N,M,c\ge 1$ and $\chi$ a partial $c$-coloring of $\GNM$.
The numbers $N,M,c$ are in unary. 
\item
{\it Output} YES if there is an extension of $\chi$ to a total $c$-coloring of 
$\GNM$, NO otherwise.
\end{itemize}
$\GCE$ stands for {\it Grid Coloring Extension}.
\end{definition}

We show that $\GCE$ is $\NP$-complete.
This result may explain why the original $17\times 17$ challenge
was so difficult.
Then again---it may not.
We discuss this further in Section~\ref{se:no}.

\section{$\GCE$ is $\NP$-complete}\label{se:npc}

Before showing that $\GCE$ is $\NP$-complete we briefly discuss its
status within $\NP$.
We first state and prove an easy upper bound.

\begin{theorem}
$\GCE\in\NTIME(O(N^2M^2))$ with certificate of size $O(NM\log c)$. 
\end{theorem}

\begin{proof}
Here is a nondeterministic algorithm

\begin{enumerate}
\item 
{\it Input} $(N,M,c,\chi)$.
\item
Guess an extension $\chi'$ of the $c$-coloring $\chi$ to a total mapping of $\GNM$ to $[c]$.
Note that $\chi'$, the certificate, is of size $O(NM\log c)$. 
\item 
For all 
$$\{(a,b),(a+d_1,b),(a+d_1,b+d_2),(a,b+d_2)\}\subseteq \GNM$$
do the following
\begin{enumerate}
\item
Check if 

$$\chi'(a,b)=\chi'(a+d_1,b)=\chi'(a+d_1,b+d_2)=\chi'(a,b+d_2).$$

\item
If yes then this branch stops and {\it outputs} NO.
\item
If no then (a) if this is the last rectangle to check then stop and {\it output} YES,
(b) if not then proceed to the next rectangle. 
\end{enumerate}
\end{enumerate}

Each execution of the loop body takes $O(1)$ time.
To get a time bound we need an upper bound on how often the loop is
executed. This is upper bounded by the number of  rectangles.

The number of ways to pick $a$ is $N$.
The number of ways to pick $b$ is $M$.
The number of ways to pick $d_1$ is $\le N$.
The number of ways to pick $d_2$ is $\le M$.
Hence the number of rectangles is $O(N^2M^2)$. 
Hence the runtime of any one branch is $O(N^2M^2)$.
Hence the algorithm is in $\NTIME(O(N^2M^2))$. 
\end{proof}

We obtain a better bound.
Kreveld and De Berg~\cite{KB-1991} proved, in our notation,  the following lemma.

\begin{lemma}\label{le:KB}
There is an algorithm that will, given a set of cells $P\subseteq \GNM$,
determine if $P$ contains a rectangle, in time $O((NM)^{3/2})$.
\end{lemma}

From this result we show the following:

\begin{theorem}\label{th:innp}
$\GCE\in\NTIME(O(c(MN)^{3/2}))$.
\end{theorem}

\begin{proof}
Given $(N,M,c,\chi)$ the witness is a proposed extension $\chi'$ of $\chi$ to a $c$-coloring
of $\GNM$. The following algorithm tries to verifies that $\chi'$ is a coloring.

\begin{enumerate}
\item
{\it Input} $(N,M,c,\chi,\chi')$. We assume the colors are $\{1,\ldots,c\}$.
\item
Verify that $\chi'$ is an extension of $\chi$. This takes $O(NM)$ steps.
\item
For all $1\le i\le c$
\begin{enumerate}
\item
Let $P$ be the set of cells colored $i$. It takes $O(MN)$ time to identify $P$.
\item
Use the algorithm from Lemma~\ref{le:KB} to determine if $P$ contains a rectangle.
This takes time $O((NM)^{3/2})$.
If the algorithm says $P$ contains a rectangle then
{\it output} NO and stop. Otherwise proceed to the next $i$.
\end{enumerate}
\item
(If the algorithm got here then, for all $1\le i\le c$, there is no $i$-colored
rectangle.)
{\it Output} YES and stop.
\end{enumerate}

The time spent in the For-Loop dominates everything else. That time is clearly $O(c(NM)^{3/2})$.
\end{proof}

We make one observation about $\GCE$ and $\SAT$ before our proof.
It is an easy exercise to express the question $(N,M,c,\chi)\in \GCE$ as a SAT formula.
(This was the starting point for the work of Steinback and Posthoff with $\chi$
being the empty function.)
This shows that $\GCE$ reduces to $\SAT$ but not that $\SAT$ reduces to $\GCE$.
Hence this reduction does not help us obtain a lower bound on the complexity of $\GCE$.  

We now show $\GCE$ is $\NP$-complete.

\begin{theorem}\label{th:npc}
$\GCE$ is $\NP$-complete.
\end{theorem}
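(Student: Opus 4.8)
The plan has two parts. For membership in $\NP$: taking $N$ and $M$ in unary (the grid has to be written down), a witness for $(N,M,c,\chi)\in GCE$ is a total $c$-coloring of $G_{N,M}$ extending $\chi$, which has polynomial size; that it extends $\chi$ is checked directly, and that it has no monochromatic rectangle can be verified in polynomial time (for each pair of columns and each color, scan the rows, counting those that realize that color in both columns, and reject as soon as a count reaches $2$). Hence $GCE\in\NP$.

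For $\NP$-hardness I would reduce from GRAPH $k$-COLORABILITY. (One could instead start from 3-SAT, but graph coloring is more natural here, since "adjacent vertices get different colors" is an equality condition, which is exactly what a monochromatic rectangle detects.) Given a graph $H=([n],E)$ and a bound $k$, I would build a grid instance whose columns are one "vertex column" per $i\in[n]$ plus a few auxiliary columns, and whose rows are: a "variable row" $\rho_0$; for each edge $e=\{i,j\}\in E$ and each color $a\in[k]$ a "detector row" $\rho_{e,a}$ with $(\rho_{e,a},i)$ and $(\rho_{e,a},j)$ pre-colored $a$; and, if needed, "forbidding rows". The only free cells would be the variable cells $V_i=(\rho_0,i)$; everything else is pre-colored. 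The detector row $\rho_{e,a}$ enforces that if $\chi(V_i)=\chi(V_j)=a$ with $\{i,j\}\in E$ then $\{V_i,V_j,(\rho_{e,a},i),(\rho_{e,a},j)\}$ is a monochromatic rectangle, so in any extension the induced map $\phi(i):=\chi(V_i)$ is a proper coloring; a palette-restriction gadget forces $\chi(V_i)\in[k]$, either by taking the total number of colors to be exactly $k$, or by adding for each outlawed color $\beta$ a forbidding row $\rho_\beta$ pre-colored $\beta$ on every vertex column together with a private column $\gamma_\beta$ with $(\rho_0,\gamma_\beta)=(\rho_\beta,\gamma_\beta)=\beta$, so that $\chi(V_i)=\beta$ would complete the monochromatic rectangle $\{V_i,(\rho_0,\gamma_\beta),(\rho_\beta,i),(\rho_\beta,\gamma_\beta)\}$.

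With the construction in hand, correctness is a routine double implication: from a proper $k$-coloring $\phi$ of $H$, set $\chi(V_i):=\phi(i)$ and verify by a short case analysis over the few shapes of rectangle that can pass through a variable cell that no monochromatic rectangle appears; conversely, from any valid extension, $\phi(i):=\chi(V_i)$ lies in $[k]$ by the palette gadget and is proper by the detector gadget. The reduction runs in polynomial time provided the grid stays polynomial in $n$, $|E|$ and $k$.

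The real content, and the step I expect to be the main obstacle, is showing that the pre-colored part already contains no monochromatic rectangle and that the gadgets never interact to produce one; equivalently, that every color class stays rectangle-free. The subtle point is how to color the "don't-care" cells that each auxiliary row and column necessarily has in the full $[N]\times[M]$ grid: giving each its own fresh color inflates the color count, which (in the forbidding-row variant) drags in more auxiliary rows, which need still more colors, a circularity that must be broken. I would break it by (a) using one junk color per row, so each junk color class sits inside a single row and is automatically rectangle-free; (b) sharing each forbidding row across all vertices, so only $O(c-k)$ extra rows are needed; and (c) checking that the genuine color classes are rectangle-free, namely that for $a\le k$ the class is a disjoint union of edges of $H$ and for a forbidding color $\beta$ it is a star in row $\rho_\beta$ with pendant cells, and that these are disjoint from the junk colors and from one another. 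Making all these counts close up polynomially, and dispatching the boundary cases (very small $n$, isolated vertices, and so on), is where the care lies.
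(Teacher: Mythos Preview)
Your membership argument is fine, and the detector-row gadget is sound: pre-colouring $(\rho_{e,a},i)=(\rho_{e,a},j)=a$ for each edge $e=\{i,j\}$ and each $a\in[k]$ does force that no valid extension can set $V_i=V_j=a$, and the colour-$a$ class among those cells is indeed a disjoint union of edges, hence rectangle-free.

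The gap is exactly where you suspected: the palette-restriction step does not close up. Take your option ``one junk colour per row'' together with a forbidding row $\rho_\beta$ for each junk colour $\beta$. If $\beta=J_{e,a}$ is the junk colour filling the don't-care cells of the detector row $\rho_{e,a}$ (with $e=\{i,j\}$), then $\beta$ already occupies every vertex column $\ell\notin\{i,j\}$ of that row; but your forbidding row $\rho_\beta$ is pre-coloured $\beta$ on \emph{every} vertex column. As soon as $n\ge 4$ there are two columns $\ell_1,\ell_2\notin\{i,j\}$, and the four cells $(\rho_{e,a},\ell_1),(\rho_{e,a},\ell_2),(\rho_\beta,\ell_1),(\rho_\beta,\ell_2)$ form a monochromatic rectangle in the pre-colouring itself. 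So the partial colouring $\chi$ you output is not even a legal instance. If instead you refrain from colouring the don't-care cells of $\rho_\beta$ with $\beta$ and introduce a second-level junk colour $J'_\beta$ for them, that colour lives only in private columns, and nothing in your gadgets stops the adversary from setting \emph{every} $V_i$ equal to $J'_\beta$: this creates no rectangle (the $J'_\beta$-cells in $\rho_\beta$ and the $J'_\beta$-cells in $\rho_0$ lie in disjoint column sets), so the grid instance is extendable regardless of whether $H$ is $k$-colourable. The circularity you flagged is therefore not merely a bookkeeping nuisance; either horn of it breaks the reduction. (The ``take $c=k$'' alternative fails too: with only $k$ colours available, the $\Theta(n|E|)$ don't-care cells in the detector block cannot be filled rectangle-free once $|E|$ is large, by K\H{o}v\'ari--S\'os--Tur\'an.)

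For comparison, the paper reduces from $3$-SAT rather than graph colouring, but the decisive difference is its isolation gadget: to make a colour $D$ appear exactly once in the ``main'' grid, it appends an extra column on the left and an extra row on the bottom, both filled with $D$ except for a single cell coloured $T$. Any second occurrence of $D$ inside the main grid then completes a rectangle with the appended row and column. Crucially, the appended strips for different $D$'s are nested and use pairwise distinct colours, so they do not interact, and their lone non-$D$ cells all carry the ordinary colour $T$, so no new colours are introduced and the circularity never starts. Your forbidding-row idea could likely be repaired by borrowing this external-strip trick in place of the $\rho_\beta,\gamma_\beta$ gadget.
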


\begin{proof}

By Theorem~\ref{th:innp}, $\GCE\in\NP$.

We give a reduction of $\thSAT$ to $\GCE$.
The input will be a 3CNF formula
$$\phi(x_1,\ldots,x_n)=C_1\wedge \cdots\wedge C_m$$
with $n$ free variables and $m$ clauses.
The output will be  $(N,M,c,\chi)$ where
\begin{itemize}
\item
$N,M,c\in\N$,
\item
$\chi$ is a partial $c$-coloring of $\GNM$, and
\item
$\phi\in \thSAT\hbox{ iff } (N,M,c,\chi)\in \GCE.$
\end{itemize}

We can assume that $\phi$ never has a clause that contains either
(1) the same literal twice, or
(2) a variable and its negation.
Condition (1) will be needed in Part III of the construction.
Condition (2) will be needed in the proof of Claim 4.

The reduction we show you {\it does not quite work!}; however,
it has most of the ideas needed.
There is a problem with it that will be revealed when we try to prove
Claim 4. During that proof we will see what goes wrong and modify the
construction so that Claim 4 is true.

Visualize the full grid as a core subgrid with additional entries to the left and below.
These additional entries are there to enforce that some colors in the core grid occur only once.

\newpage


{\bf Conventions}
\begin{enumerate}
\item
 Throughout this proof {\it extension} means
{\it an extension that uses the colors $T$,$F$ on some of the uncolored cells and
does not have a monochromatic rectangle}. 
It may or may not extend to the entire grid.
\item
In our figures we will have literals labeling some of the rows and clauses labeling some of
the columns. These are not part of the construction. The literals and clauses are visual aids. We may refer to
{\it row $x_7$} or {\it column $C_3$}.
\item
In our figures we will have double lines to separate things. These lines are not part of the construction.
These are visual aids.
\item
The colors will be $T$, $F$, and some of the $(i,j)\in \GNM$. Many of the cells that
are in the core grid will be colored $(i,j)$ where that is their position in the core
grid. In the figures we will denote the color by $D$ for distinct. Part I of the construction
will make sure that no other cell in the core grid can have that color.
\end{enumerate}

\bigskip

The reduction is in four parts.
We will mainly construct a core grid which will be $2n+m$ by $2n+2m+1$
(when we later modify the construction the core grid will be bigger, though still linear in $n,m$).

In all figures the left bottom cell of the core grid is indexed $(1,1)$.

\noindent
{\bf Part I: Forcing a color to appear only once in the core grid.}

For $(i,j)$ in the core grid  we will often set $\chi(i,j)$ to $(i,j)$ and
then never reuse $(i,j)$ in
the core grid. By doing this, we make having a monochromatic rectangle
rare and have control over when that happens.

We show how to color the cells that are not in the core grid
to achieve this.
Part I will be the final step in the reduction since we need to know
the size of the grid before we can apply it; however, we show Part I first.

Say we want the cell $(2,4)$ in the core grid to be colored $(2,4)$ and
we do not want this color appearing anywhere else in the core grid.
We can do the following: add a column of $(2,4)$'s to the left end (with one exception)
and a row of $(2,4)$'s at the bottom.
See Figure~\ref{fig:coloring24}.

\begin{figure}
\begin{large}
\[
\begin{array}{|c||c|c|c|c|c|c|c|c|c|}
\hline
(2,4) &    & &   &   &   & &   \cr
\hline
(2,4) &    & &   &   &   & &   \cr
\hline
(2,4) &   &&  &   &   & &   \cr
\hline
T     &   &(2,4) & &   &   & &   \cr
\hline
(2,4) &   &      & &   &   & &   \cr
\hline
(2,4) &   &       &   &   &   & &   \cr
\hline
(2,4) &   & &   &   &   & &   \cr
\hline
\hline
(2,4) & (2,4) &(2,4)& (2,4) & (2,4) &(2,4)  &(2,4)&(2,4) \cr
\hline
\end{array}
\]
\caption{Cell $(2,4)$ is colored $(2,4)$. No other cell can be colored $(2,4)$ in a proper coloring.}
\label{fig:coloring24}
\end{large}
\end{figure}

It is easy to see that in any extension of he 
coloring of the grid in Figure~\ref{fig:coloring24}
the only cells that can have the color $(2,4)$ are those shown
to already have that color. It is also easy to see that the color $T$ we
have will not help to create any monochromatic rectangles since
there are no other $T$'s in its column.
The $T$ we are using {\it is} the same $T$ that will later mean \emph{true}\@.  
We could have used $F$.
We do not want to use new colors since we would have no control over where else
they could be used.

What if some other cell needs to have a unique color?
Let's say we also want to color cell $(5,3)$ in the core grid with $(5,3)$
and do not want to color anything else in the core grid $(5,3)$.
Then we use the grid in Figure~\ref{fig:2453}.

\begin{figure}
\begin{large}
\[
\begin{array}{|c|c||c|c|c|c|c|c|c|c|c|}
\hline
(5,3) & (2,4) &      &      &   &   &         & & &  \cr
\hline
(5,3) & (2,4) &      &      &   &   &         & & &  \cr
\hline
(5,3) & (2,4) &      &      &  &   &          & & &  \cr
\hline
(5,3) &  T    &      &(2,4) &   &   &         & & &  \cr
\hline
T     & (2,4) &      &     &   &   & (5,3)   & & &  \cr
\hline
(5,3) & (2,4) &      &      &   &   &         & & &  \cr
\hline
(5,3) & (2,4) &      &      &       &       &       &     &     &  \cr
\hline
\hline
(5,3) & (2,4) & (2,4)& (2,4)& (2,4) & (2,4) &(2,4)  &(2,4)&(2,4)&(2,4) \cr
\hline
(5,3) & (5,3) & (5,3)& (5,3)& (5,3) & (5,3) &(5,3)  &(5,3)&(5,3)&(5,3) \cr
\hline
\end{array}
\]
\caption{$(2,4)$ and $(5,3)$ within a sub-grid}
\label{fig:2453}
\end{large}
\end{figure}

It is easy to see that in any extension of the coloring of the grid in Figure~\ref{fig:2453}
the only cells that can have the color $(2,4)$ or $(5,3)$ are those shown
to already have those colors.

For the rest of the construction we will only show the core grid.
If we denote a color as $D$ (short for {\it Distinct}) in the cell $(i,j)$
then this means that
\begin{enumerate}
\item
cell $(i,j)$ is color $(i,j)$, and
\item
we have used the above gadget to make sure that $(i,j)$ does not occur
as a color in any other cell of the core grid.
\end{enumerate}

Note that when we have $D$ in the $(2,4)$ cell and in the $(5,3)$
cell, they denote different colors.

\bigskip

\noindent
{\bf Part II: Forcing $(x,\xbar)$ to be colored $(T,F)$ or $(F,T)$.}

The first column of the core grid will have $2n$ blanks and then $m$ $D$'s.
We will use the $m$ $D$'s later.
Figure~\ref{fig:setvars} illustrates what we do in the $n=4$ case.

We will arrange things so that the color of
the blanks in Figure~\ref{fig:setvars} will all be either $T$ or $F$.
We refer to the color of the cell next to $x_i$ as {\it the color of $x_i$}.
Same for $\xibar$.

It is easy to see that in any extension of the coloring of Figure~\ref{fig:setvars}:

\begin{itemize}
\item
If $x_i$ is colored $T$ then $\xibar$ is colored $F$.
\item
If $x_i$ is colored $F$ then $\xibar$ is colored $T$.
\end{itemize}

\begin{figure}
\begin{large}
\[
\begin{array}{c|c||c|c||c|c||c|c||c|c|c|}
\hline
           & D   & D & D  & D  & D & D & D & D & D\cr
\hline
           & D   & D & D  & D  & D & D & D & D & D\cr
\hline
           & D   & D & D  & D  & D & D & D & D & D\cr
\hline
           & D   & D & D  & D  & D & D & D & D & D\cr
\hline
\hline
\xfourbar  &     & D & D  & D  & D & D & D & T & F\cr
\hline
x_4        &     & D & D  & D & D & D & D & T & F\cr
\hline
\hline
\xthreebar &     & D & D  & D & D & T & F & D & D \cr
\hline
x_3  &     & D & D  & D & D & T & F & D & D \cr
      \hline
      \hline
 \xtwobar  &     & D & D & T & F & D & D & D & D \cr
\hline
x_2        &     & D & D & T & F & D & D & D & D\cr
\hline
\hline
\xonebar   &     & T & F & D & D & D & D & D & D\cr
\hline
 x_1       &     & T & F & D & D & D & D & D & D \cr
\hline
\end{array}
\]
\caption{Literal Gadget with four variables}
\label{fig:setvars}
\end{large}
\end{figure}

We leave it to the reader to generalize Figure~\ref{fig:setvars} to $n$ variables.

We will call the left most column, which is mostly blank, {\it the literal column}.

This part is what will need to be adjusted. It will turn out that we need several
copies of each literal. During the proof of Claim 4 we will see why this is true
and how to achieve it.

\bigskip

\noindent
{\bf Part III: Forcing the coloring to satisfy a single clause}

For each clause $C=L_1 \vee L_2 \vee L_3$
we will use two columns.
These columns will be called {\it clause columns}.

Before saying what we put into the columns, Figure~\ref{fig:clausesetup} is the initial setup in the case of
$n=4$ and $m=4$. We leave it to the reader to generalize to $n,m$.
The  $X$'s in Figure~\ref{fig:clausesetup}  will be replaced by $T$'s, $F$'s, or blanks
in the next step.

\begin{figure}
\begin{large}
\[
\begin{array}{c|c|c|c|c||c|c||c|c||c|c||c|c||c|c||c|c||c|c|c|}
           &  &    &   &    &    &   &   &   &   &   & C_1 & C_1 & C_2 & C_2 & C_3 & C_3  & C_4  &C_4    \cr
\hline
\hline
           & D &    & D & D  & D  & D & D & D & D & D & D   & D   & D   & D   & D   & D    & T & T   \cr
\hline
           & D &     & D & D  & D  & D & D & D & D & D & D   & D   & D   & D   & T   & T & D & D   \cr
\hline
           & D &     & D & D  & D  & D & D & D & D & D & D & D & T & T & D & D & D & D  \cr
\hline
           & D &    & D & D  & D  & D & D & D & D & D & T & T & D & D & D & D & D & D \cr
\hline
\hline
\xfourbar  &   &  & D & D  & D  & D & D & D & T & F & X & X & X & X & X & X & X & X \cr
\hline
x_4        &   &  & D & D  & D & D & D & D & T & F & X & X & X & X & X & X & X & X \cr
\hline
\hline
\xthreebar &   &  & D & D  & D & D & T & F & D & D & X & X & X &X & X & X & X & X \cr
\hline
x_3        &   &  & D & D  & D & D & T & F & D & D & X & X & X &X & X & X & X &X \cr
 \hline
 \hline
 \xtwobar  &   &  & D & D  & T & F & D & D & D & D & X & X &X &X &X & X & X &X \cr
\hline
x_2        &   &  & D & D & T & F & D & D & D & D & X & X & X &X &X &X & X & X \cr
\hline
\hline
\xonebar   &   &  & T & F & D & D & D & D & D & D & X & X & X &X &X &X  & X & X  \cr
\hline
 x_1       &   &  & T & F & D & D & D & D & D & D & X & X & X &X &X &X & X & X \cr
\hline
\end{array}
\]
\caption{Clause setup}
\label{fig:clausesetup}
\end{large}
\end{figure}

Let $C=L_1 \vee L_2 \vee L_3$.
Figure~\ref{fig:clausegadget} illustrates how we color, or leave blank, the cells in the $C$-column.

\begin{figure}
\begin{large}
\[
\begin{array}{c||c|c|c|c|c|c|c|c|}
                &              &\cdots  &  C        & C       &\cdots \cr
\hline
\hline
                &  D           & \cdots & T         &T        & \cdots \cr
\hline
                &\vdots        &\cdots  &  \vdots   &  \vdots &\cdots  \cr
\hline
            L_3 &              &\cdots  &   D       & F       & \cdots   \cr
\hline
                &\vdots        & \cdots &  \vdots   & \vdots &\cdots  \cr
\hline
            L_2 &              &\cdots  &           &        & \cdots   \cr
\hline
                &\vdots        & \cdots & \vdots    & \vdots &\cdots  \cr
\hline
            L_1 &              &\cdots  & F         & D      & \cdots  \cr
\hline
                &\vdots        & \cdots & \vdots    & \vdots &\cdots \cr
\hline

\end{array}
\]
\caption{The clause gadget}
\label{fig:clausegadget}
\end{large}
\end{figure}

Note that since we never have the same literal appearing twice in a clause,
the construction of the Clause Gadget can be carried out.

We redraw Figure~\ref{fig:clausegadget} as Figure~\ref{fig:clausegadgetsmall}
for ease of use.  We refer to the partial coloring in Figure~\ref{fig:clausegadgetsmall}
as $\chi$.

\begin{figure}
\begin{large}
\[
\begin{array}{c||c|c|c|c|c|c|}
                &               & C & C \cr
\hline
\hline
                &  D            & T   &T\cr
\hline
                L_3 & & D  & F  \cr
\hline
                L_2            & &    &   \cr
\hline
                L_1            & & F   & D \cr
\hline
\end{array}
\]
\caption{The clause gadget---easier to work with}
\label{fig:clausegadgetsmall}
\end{large}
\end{figure}

\noindent
{\bf Claim 1:} Let $\chi$ denote the partial coloring shown in Figure 7.
If $\chi'$ is an extension of $\chi$
then $\chi'$ cannot have the $L_1,L_2,L_3$ cells all colored $F$.

\noindent
{\bf Proof of Claim 1:}

Assume, by way of contradiction,
that $L_1, L_2,L_3$ are all colored $F$.
Then we have the partial coloring in Figure~\ref{fig:allF}

\begin{figure}
\begin{large}
\[
\begin{array}{c||c|c|c|c|c|c|}
              &                & C & C \cr
\hline
\hline
              &   D            & T   &T\cr
\hline
L_3            &  F              & D  & F  \cr
\hline
L_2         &      F              &     &   \cr
\hline
L_1            &   F              & F   & D \cr
\hline
\end{array}
\]
\caption{$L_1$, $L_2$, $L_3$ all set to $F$}
\label{fig:allF}
\end{large}
\end{figure}

The reader can verify that if the two blank cells of Figure~\ref{fig:allF}
are colored $TT$, $TF$, $FT$, or $FF$, there will be a monochromatic rectangle.

\noindent
{\bf End of Proof of Claim 1}

\noindent
{\bf Claim 2:} Let $\chi'$ be an extension of the coloring in Figure~\ref{fig:clausegadgetsmall}
that colors $L_1,L_2,L_3$ but not the other two blank cells.
Assume that $\chi'$ colors $L_1, L_2, L_3$ anything except $F,F,F$.
Then $\chi'$ can be extended to color the two blank cells.

\noindent
{\bf Proof of Claim 2}

There are seven cases based on $(L_1,L_2,L_3)$ being labeled
$FFT$, 
$FTF$, 
$FTT$, 
$TFF$, 
$TFT$, 
$TTF$, 
$TTT$. 
For each one we give
a coloring of the remaining two blank cells so that
no monochromatic rectangle is formed.

\noindent
{\bf Case 1}

\[
\begin{array}{c||c|c|c|c|c|c|}
             &                & C & C \cr
\hline
\hline
             &    D            & T   &T\cr
\hline
L_3            &  F               & D  & F  \cr
\hline
L_2        &      F             &  F  &T  \cr
\hline
 L_1       &      T              & F   & D \cr
\hline
\end{array}
\]

\noindent
{\bf Case 2}

\[
\begin{array}{c||c|c|c|c|c|c|}
             &                & C & C \cr
\hline
\hline
             &    D            & T   &T\cr
\hline
L_3            &  F               & D  & F  \cr
\hline
L_2        &      T             &  T  & F \cr
\hline
 L_1       &      F              & F   & D \cr
\hline
\end{array}
\]

\noindent
{\bf Case 3}

\[
\begin{array}{c||c|c|c|c|c|c|}
             &                & C & C \cr
\hline
\hline
             &    D            & T   &T\cr
\hline
L_3            &  F               & D  & F  \cr
\hline
L_2        &      T             &   T & F \cr
\hline
 L_1       &      T              & F   & D \cr
\hline
\end{array}
\]

\noindent
{\bf Case 4}

\[
\begin{array}{c||c|c|c|c|c|c|}
             &                & C & C \cr
\hline
\hline
             &    D            & T   &T\cr
\hline
L_3            &  T               & D  & F  \cr
\hline
L_2        &      F             &   T & F \cr
\hline
 L_1       &      F              & F   & D \cr
\hline
\end{array}
\]

\noindent
{\bf Case 5}

\[
\begin{array}{c||c|c|c|c|c|c|}
             &                & C & C \cr
\hline
\hline
             &    D            & T   &T\cr
\hline
L_3            &  T               & D  & F  \cr
\hline
L_2        &      F             &   T & F \cr
\hline
 L_1       &     T               & F   & D \cr
\hline
\end{array}
\]

\noindent
{\bf Case 6}

\[
\begin{array}{c||c|c|c|c|c|c|}
             &                & C & C \cr
\hline
\hline
             &    D            & T   &T\cr
\hline
L_3            &   T              & D  & F  \cr
\hline
L_2        &      T             &  T  & F \cr
\hline
 L_1       &     F               & F   & D \cr
\hline
\end{array}
\]

\noindent
{\bf Case 7}

\[
\begin{array}{c||c|c|c|c|c|c|}
             &                & C & C \cr
\hline
\hline
             &    D            & T   &T\cr
\hline
L_3            &  T               & D  & F  \cr
\hline
L_2        &      T             &  T  & F \cr
\hline
 L_1       &      T              & F   & D \cr
\hline
\end{array}
\]

\noindent
{\bf End of Proof of Claim 2}

\bigskip

\noindent
{\bf Part IV: Putting it all together}

Recall that $\phi(x_1,\ldots,x_n)=C_1\wedge \cdots \wedge C_m$
is a 3CNF formula.
We first define the core grid and later define the
entire grid and $N,M,c$.
The core grid will have $2n+m$ rows and $2m+2n+1$ columns
(when we later modify the construction the core grid will be bigger though still linear in $n,m$)
The $2n$ left-most columns are partially colored, and labeled with literals,  as described in Part II.
The $m$ top-most rows are colored, and labeled with clauses,  as described in Part III.
The rest of the core grid is colored as described in Part III.

The core grid is now complete.
For every $(i,j)$ that is colored $(i,j)$, we perform the method in Part I
to make sure that $(i,j)$ is the only cell with color $(i,j)$.
Let the number of such $(i,j)$ be $E$.
The number of colors $c$ is $E+2$. This will force everything else to be colored $T$ or $F$.
Note that $E=\Theta(NM)$.

In Figure~\ref{fig:whyfails} we present the core grid for the instance of $\GCE$ obtained if the
original formula is

$$
(x_1\vee x_2 \vee \xthreebar)\wedge(x_1\vee x_2\vee x_4)\vee(\xtwobar\vee x_3\vee x_4).
$$

\begin{figure}[h]
\begin{large}
\[
\begin{array}{c|c|c|c|c||c|c||c|c||c|c||c|c||c|c||c|c|}
           &  &    &   &    &    &   &   &   &   &   & C_1 & C_1 & C_2 & C_2 & C_3 & C_3   \cr
\hline
\hline
           & D &    & D & D  & D  & D & D & D & D & D & D   & D   & D   & D   & T   & T   \cr
\hline
           & D &     & D & D  & D  & D & D & D & D & D & D & D & T & T & D & D  \cr
\hline
           & D &    & D & D  & D  & D & D & D & D & D & T & T & D & D & D & D \cr
\hline
\hline
\xfourbar  & &  & D & D  & D  & D & D & D & T & F & D & D & D & D & D & D  \cr
\hline
x_4        & &  & D & D  & D  & D & D & D & T & F & D & D & D & F & D & F \cr
\hline
\hline
\xthreebar & &  & D & D  & D & D & T & F & D & D & D & F & D &D & D & D \cr
\hline
x_3        & &  & D & D  & D & D & T & F & D & D & D & D & D &D & &  \cr
 \hline
 \hline
 \xtwobar  & &  & D & D  & T & F & D & D & D & D & D & D &D   &D &F &D \cr
\hline
x_2        & &  & D & D & T & F & D & D & D & D &  & &  & &D &D  \cr
\hline
\hline
\xonebar   & &  & T & F & D & D & D & D & D & D &   D & D & D &D &D &D  \cr
\hline
 x_1       & &  & T & F & D & D & D & D & D & D &   F & D & F &D &D &D \cr
\hline
\end{array}
\]
\caption{Example with $(x_1\vee x_2 \vee \xthreebar) \wedge (x_1\vee x_2 \vee x_4) \wedge (\xtwobar\vee x_3 \vee x_4) $}
\label{fig:whyfails}
\end{large}
\end{figure}

\noindent
{\bf Claim 3:}
Let $\phi(x_1,\ldots,x_n)$ be a 3CNF formula.
Let $(N,M,c,\chi)$ be the result of the reduction described above.
If $(N,M,c,\chi)\in \GCE$ then $\phi\in\thSAT$.

\noindent
{\bf Proof of Claim 3}

\noindent
Assume that  $(N,M,c,\chi)\in \GCE$. 
According to the construction in Part II
the first column gives a valid truth
assignment for 
$x_1,\ldots,x_n$ (and hence also for
$\xonebar, \ldots, \xnbar$). By Claim 1, for every clause $C=L_1\vee L_2\vee L_3$
this truth assignment cannot assign $L_1,L_2$ and $L_3$ all to $F$.
Hence this is a satisfying assignment, so $\phi\in\thSAT$.

\noindent
{\bf End of Proof of Claim 3}

We will now try to show that 
if $\phi\in\thSAT$ then $(N,M,c,\chi)\in \GCE$.
{\bf We will fail!} This will motivate us to modify our construction. 

\noindent
{\bf Claim 4 (which is false):}
Let $\phi(x_1,\ldots,x_n)$ be a 3CNF formula.
Let $(N,M,c,\chi)$ be the result of the reduction described above.
If $\phi\in\thSAT$ then $(N,M,c,\chi)\in \GCE$.

\noindent
{\bf Proof of Claim 4 (which will fail)}

Assume $\phi\in\thSAT$. Let $(b_1,\ldots,b_n)$ be a satisfying truth assignment
where, for $1\le i\le n$, $b_i\in \{T,F\}$.
We use this to obtain a coloring of $\GNM$ that is an extension of $\chi$.

Color the literal column in the obvious way: the entry labeled with literal $L$
is labeled the truth assignment of $L$.
We now show how we try to color the blank cells in the clause columns.

Let $C=L_1\vee L_2\vee L_3$ be a clause.
The part of the grid associated to it is in Figure~\ref{fig:clausegadget}.

The literal column we have already colored. Since the assignment was satisfying,
at least one of $L_1,L_2,L_3$ was set to $T$.
We use Claim 2 to extend the coloring to the blank cells.
This forms a grid coloring. We try to prove this coloring is proper.

Assume, by way of contradiction, that there is a monochromatic rectangle
which we call $R$.

\noindent
{\bf Case 1} There is a clause $C$ such that $R$ uses the two $T$'s associated to $C$.
The only way these $T$'s can be involved in a monochromatic rectangle is if the two blank cells
associated to $C$ are colored $T$. By the 7 cases in Claim 2 this cannot occur.

\noindent
{\bf Case 2}
There is a variable $x$ such that $R$ uses the two $T$'s or two $F$'s associated to $x$.
Figure~\ref{fig:case2} shows what this looks like (we only include the relevant parts).
We assume $x$ is the first variable in $C$ (the other cases are either similar or cannot occur). 

\begin{figure}
\begin{large}
\[
\begin{array}{c|c||c|c||c|c||c|}
              & \cdots   & D & D  & \cdots  & C & \cdots \cr
\hline
              & \vdots   & \vdots & \vdots  & \vdots  & D & \vdots \cr
\hline
\xbar         & \cdots     & T & F & \cdots & F & \cdots \cr
\hline
x             &  \cdots    & T & F &\cdots  & F & \cdots \cr
\end{array}
\]
\caption{Case 2 of Claim 4}
\label{fig:case2}
\end{large}
\end{figure}

No clause-column has two $T$'s in it, so $R$ must be colored $F$.
The only way there can be two $F$'s in the literal-column is if they
are associated to a literal and its negation, as in Figure~\ref{fig:case2}.
However, the only way that configuration can happen is if $x$ and $\xbar$
are in the same clause.
This cannot happen since $\phi$ has no clauses with both a variable and its negation in it.

\noindent
{\bf Case 3} $R$ uses the literal column and one of the clause columns.
By Claim 2, $R$ is not monochromatic.

\noindent
{\bf Case 4} The only case left is if $R$ uses two clause columns. {\it This can occur!}
This is where the construction fails!
We give an example. Recall that Figure~\ref{fig:whyfails} is the instance of $\GCE$ from
the formula
$$(x_1\vee x_2 \vee \xthreebar) \wedge (x_1\vee x_2 \vee x_4) \wedge (\xtwobar\vee x_3 \vee x_4).$$
Lets say we take the satisfying truth assignment
$$x_1=T, x_2=F, x_3=T, x_4=F.$$

If we put these in the literal column
and use the proof of Claim 2 to color the blank cells in the clause columns,
the result is the coloring of the entire grid seen in Figure~\ref{fig:fails}.
The boldfaced colors are the ones caused by the truth assignment. 
The asterisks show a monochromatic rectangle. Hence the construction produces a
non-proper coloring and is incorrect. 

\begin{figure}
\begin{large}
\[
\begin{array}{c|c|c|c|c||c|c||c|c||c|c||c|c||c|c||c|c|}
           &  &    &   &    &    &   &   &   &   &   & C_1 & C_1 & C_2 & C_2 & C_3 & C_3   \cr
\hline
\hline
           & D &     & D & D  & D  & D & D & D & D & D & D   & D   & D   & D   & T   & T  \cr
\hline
           & D &     & D & D  & D  & D & D & D & D & D & D & D & T & T & D & D  \cr
\hline
           & D &    & D & D  & D  & D & D & D & D & D & T & T & D & D & D & D \cr
\hline
\hline
\xfourbar  & \bm{T} &  & D & D  & D  & D & D & D & T & F & D & D & D & D & D & D  \cr
\hline
x_4        & \bm{F} &  & D & D  & D & D & D & D & T & F & D & D & D & F & D & F \cr
\hline
\hline
\xthreebar & \bm{F} &  & D & D  & D & D & T & F & D & D & D & F & D &D & D & D \cr
\hline
x_3        & \bm{T} &  & D & D  & D & D & T & F & D & D & D & D & D  &D &\bm{T}  &\bm{F}  \cr
 \hline
 \hline
 \xtwobar  & \bm{T} &  & D & D  & T & F & D & D & D & D & D & D &D   &D &F &D \cr
\hline
x_2        & \bm{F} &  & D & D & T & F & D & D & D & D &  \bm{F}* & \bm{T}  & \bm{F}*  & \bm{T}&D &D  \cr
\hline
\hline
\xonebar   & \bm{F} &  & T & F & D & D & D & D & D & D &   D & D & D &D &D &D  \cr
\hline
 x_1       & \bm{T} &  & T & F & D & D & D & D & D & D &   F* & D & F* &D &D &D \cr
\hline
\end{array}
\]
\caption{Example with $(x_1\vee x_2 \vee \xthreebar) \wedge (x_1\vee x_2 \vee x_4) \wedge (\xtwobar\vee x_3 \vee x_4) $}
\label{fig:fails}
\end{large}
\end{figure}

\noindent
{\bf End of the Proof of Claim 4 (that failed)}

The way to avoid Case 4 is if we had {\it several}
copies of each literal so that if two clauses use the same literal,
they will use different copies of it.
How many? The number of copies of literal $L$ has to be at least the
number of clauses that $L$ appears in. It will be convenient to have
the number of copies of $L$ and of $\overline{L}$ be the same.
Hence if $x$ appears in $m_1$ clauses, and $\xbar$ appears in $m_2$ clauses, 
then we'll add $\max\{m_1,m_2\}$ rows for each of these literals.

Rather than give the general construction, we do an example with
the case that gave us trouble before:
$$
(x_1\vee x_2 \vee \xthreebar)\wedge(x_1\vee x_2 \vee x_4) \wedge(\xtwobar\vee x_3\vee x_4).
$$
in Figure~\ref{fig:works}.
We prove that the literals $x_1$ and $\xonebar$ behave as they should. From the example and
the proof about $x_1,\xonebar$, the reader will be able to work out the general construction.

\bigskip

\noindent
{\bf Claim 5} Let $\chi'$ be a proper extension of the coloring in Figure~\ref{fig:works}
that colors all entries of the literal column.
Then the following three statements are true.
\begin{enumerate}
\item
The $x_1$ cells all have the same color: $T$ or $F$.
\item
The $\xonebar$ cells all having the same color: $T$ or $F$.
\item
The color of $x_1$ and $\xonebar$ are different.
\end{enumerate}

\noindent
{\bf Proof of Claim 5}

We coordinate the grid by having the bottom left cell be $(1,1)$ (which is blank), 
the cell to the right has
coordinate $(2,1)$ (which has $T$), and the cell above it is $(1,2)$ (which is blank).

Assume $\chi'(1,1)=T$ (the case where $\chi'(1,1)=F$ is similar).

Since $\chi'(1,1)=T$, $\chi'(2,1)=T$, and  $\chi'(2,2)=T$, we have $\chi'(1,2)=F$.

Since $\chi'(1,2)=F$, $\chi'(5,2)=F$, and  $\chi'(5,3)=F$, we have $\chi'(1,3)=T$.

Since $\chi'(1,3)=T$, $\chi'(6,3)=T$, and  $\chi'(6,4)=T$, we have $\chi'(1,4)=F$.

\noindent
{\bf End of Proof of Claim 5}

From this the reader can work out the general construction.
\begin{figure}
\begin{large}
\[
\begin{array}{c||c||c|c|c|c|c|c||c|c|c|c|c|c||c|c||c|c||c|c||c|c||c|c|} 
          & &   &   &   &  &   &   &  &       &  &  &   &  &   &     &  &  &C_1  &C_1  &C_2  &C_2  &C_3 &C_3     \cr
\hline
          &D&D  &D  &D  &D & D &D & D&D & D   &D & D & D &D &D   &D &D & D&D &D &D &T&T  \cr
\hline
          &D&D  &D  &D  &D & D &D & D&D & D   &D & D & D &D &D   &D &D & D&D &T &T &D&D  \cr
\hline
          &D&D  &D  &D  &D & D &D & D&D & D   &D & D & D &D &D   &D &D &T &T &D &D &D&D  \cr
\hline
\hline
\xfourbar  & & D & D &D  &D &D  &D &D &D &D    &D &D  &D &D  &D    &T &F &D &D &D &D &D &D \cr
\hline
x_4        & & D & D &D  &D &D  &D &D &D &D    &D &D  &D &D  &D    &T &F &D &D &D &F &D &F \cr
\hline
\hline
\xthreebar& & D & D &D  &D &D  &D  &D &D &D & D   &D &D  &T  &F    &D &D &D &F &D &F &D &D \cr
\hline
x_3      & & D & D &D  &D &D  &D   &D &D &D & D    &D &D &T  &F    &D &D &D &D &D &D &  &  \cr
\hline
\hline
\xtwobar & & D & D &D  &D &D  &D   &D &D &D  &D    &T &F &D  &D     &D &D &D &D &D &D &D &D  \cr
\hline
x_2      & & D & D &D  &D &D  &D   &D & D&T & F    &T &F &D  &D     &D &D &D &D &  &  &D &D  \cr
\hline
\xtwobar & & D & D &D  &D &D  &D   &T &F &T  &F    &D &D &D  &D     &D &D &D &D &D &D &F &D  \cr
\hline
x_2      & & D & D &D  &D &D  &D   & T&F & D & D   &D &D &D  &D     & D& D&  &  &D &D &D &D  \cr
\hline
\hline
\xonebar & & D & D &D  &D &T  &F   &D &D &D & D    &D &D &D &D   &D &D  &D &D &D &D &D &D \cr
\hline
x_1      & & D & D &T  &F &T  &F   &D &D &D & D    &D &D &D &D   &D &D &D &D &F &D &D &D \cr
\hline
\xonebar & & T & F &T  &F &D  &D   &D &D &D & D    &D &D &D &D   &D &D &D &D &D &D &D &D \cr
\hline
x_1      & & T & F &D  &D &D  &D   &D &D &D & D    &D &D &D &D   &D &D &F &D &D &D &D &D  \cr
\hline
\end{array}
\]
\caption{Example with $(x_1\vee x_2 \vee \xthreebar) \wedge (x_1\vee x_2 \vee x_4)\wedge (\xtwobar \vee x_3\vee x_4)$}
\label{fig:works}
\end{large}
\end{figure}
\end{proof}

\noindent
{\bf Recap and the Actual Values of $N,M,c$}

Our goal was to, given 
a 3CNF formula
$$\phi(x_1,\ldots,x_n)=C_1\wedge \cdots\wedge C_m,$$
with $n$ free variables and $m$ clauses,
output an instance of $\GCE$ such that 

$$\phi\in \thSAT\hbox{ iff } (N,M,c,\chi)\in \GCE.$$

We have described the partial coloring $\chi$.
For the sake of completeness we now specify $N,M,c$. 
We first find the dimensions of the core grid. 

\begin{definition}
$o_i$ is the maximum of the number of occurrence of $x_i$ and $\xibar$. 
\end{definition}

\begin{lemma}\label{le:bsum}
$\sum_{i=1}^n o_i \le 3m$. 
 \end{lemma}

\begin{proof}
Since $o_i$ is the number of times one of $\{x_i,\xibar\}$ occurs
the sum is bounded by the number of occurrence of variables.
Since the formula is in 3CNF form, the number of occurrence of
variables is $3m$. 
\end{proof}

We use Lemma~\ref{le:bsum} to get upper bounds on several quantities
including $N,M,c$. 

\bigskip

\noindent
{\bf Number of rows in the core grid}
The literal column will have $o_i$ rows labeled $x_i$ and
$o_i$ rows labeled $\xibar$. Hence the literal column will have
$\sum_{i=1}^n 2o_i$ blank cells. 
Every clause $C$ induces a row.
(The row has all $D$'s except for two $T$'s under the columns labeled $C$;
however, we do no need that to count the number of rows.)
Hence there are $m$ additional rows.
Therefore the number of rows in the core grid is

$$N'=m+\sum_{i=1}^n 2o_i=m+2\sum_{i=1}^n o_i \le m + 3m=3m.$$

\noindent
{\bf Number of columns in the core grid}
Each variable $x_i$ induces a rectangle of height $2o_i$
and width $4o_i-2$. See Figure~\ref{fig:whyfails} for an example with 
$o_1=1$, Figure~\ref{fig:works} for an example with $o_1=2$, and
Figure~\ref{fig:threeocc} for an example with $o_1=3$. 
Each clause adds 2 columns. 
Therefore the number of columns in the core grid is

$$M'=2m+\sum_{i=1}^n (4o_i-2) = 2m + 4\sum_{i=1}^n o_i\le 2m+4\times 3m=14m\le 14m.$$

Note that $N',M'$ are linear in $n,m$ as promised earlier. 
However, the non-core part of the grid will add an $O(m^2)$ term to the size of $N,M$.

\begin{figure}
\begin{large}
\[
\begin{array}{c||c||c|c|c|c|c|c|c|c|c|c|} 
\hline
\xonebar & & D & D &D  &D &D  &D   &T &F &T  &F     \cr
\hline
x_1      & & D & D &D  &D &D  &D   &T &F & D & D    \cr
\hline
\xonebar & & D & D &D  &D &T  &F   &T &F &D & D     \cr
\hline
x_1      & & D & D &T  &F &T  &F   &D &D &D & D    \cr
\hline
\xonebar & & T & F &T  &F &D  &D   &D &D &D & D     \cr
\hline
x_1      & & T & F &D  &D &D  &D   &D &D &D & D     \cr
\hline
\end{array}
\]
\caption{Three Occurrence of $x_1$}
\label{fig:threeocc}
\end{large}
\end{figure}

\noindent
{\bf The Number of Blank Cells, $T$-Cells, $F$-Cells, and Colors} 

The first column has $\sum_{i=1}^n 2o_i$ blank cells.
Each column labeled with a clause has 1 blank cell.
Hence the number of blank cells is

$$B=m+\sum_{i=1}^n 2o_i= m+2\sum_{i=1}^n o_i.$$

Each column that is not labeled with a clause has one $T$ and one $F$.
Each column labeled with a clause has one $T$ and one $F$. Hence the number
of cells labeled with a $T$ or an $F$ is

$$2M'$$

Every cell that is neither blank, $T$, or $F$ has a distinct color.
Hence the number of new colors that are not $T$ or $F$ is 

$$E=N'M'-B-2M'\le N'M' \le 42m^2.$$

and the total number of colors is

$$c=E+2\le 42m^2+2.$$

\noindent
{\bf The real values of $N,M$}

We now deal with the non-core part of the grid. For every color that is not $T$ or $F$
we add one row and one column to the grid (see Part I of the construction). Hence

$$M=M'+E \le 14m + 42m^2 = O(m^2).$$

$$N=N'+E \le 3m + 42m^2 = O(m^2).$$

Note that $M,N$ are polynomial in the length of $\phi$.

\section{What the $\NP$-Completeness Result Does and Does Not Tell Us}\label{se:no}

The motivation for this paper was

{\it Why was finding if $G_{17,17}$ is 4-colorable so hard?}

Towards this goal we showed, in Theorem~\ref{th:npc}, that $\GCE$ is $\NP$-complete.
But does this really capture the problem we want to study?
We give several reasons why not. These will point to further investigations.

\noindent
1) The reduction in Theorem~\ref{th:npc} takes a 3CNF formula

$$\phi(x_1,\ldots,x_n) = C_1\wedge \cdots \wedge C_m$$

\noindent
and produces an instance $(N,M,c,\chi)$ of $\GCE$ such that

$$\phi\in \thSAT \hbox{ iff } (N,M,c,\phi)\in\GCE.$$

\noindent
In this instance $c=\Theta(NM$).
Hence our reduction only shows that $\GCE$ is hard if
$c$ is rather large.
So what happens if $c$ is small? See next point.

\bigskip

\noindent
2) What happens if $c$ is small? In Section~\ref{se:fpt} we show that
$\GCE$ is Fixed Parameter Tractable. In particular, the problem is in
time $O(N^2M^2) + 2^{O(c^4+\log c)}$. This leads to the following open problem:
find a framework to show that some problems in FPT are hard.

\bigskip

\noindent
3) The $17\times 17$ challenge can be rephrased as proving that
$(17,17,4,\chi)\in \GCE$ where $\chi$ is the empty partial coloring.
This is a special case of $\GCE$ since none of the
cell are pre-colored.
It is possible that the case where $\chi$ is the empty coloring
is easy.
While we doubt this is true, we have not eliminated the possibility.
How to deal with this issue?
We define the problem that is probably the one we really want to find the complexity of.

\begin{definition}
$\GC$ is the following problem:
\begin{itemize}
\item
{\it Input}  $M,N,c\in\N$. 
The numbers $N,M,c$ are in unary. So formally the input is
$(1^M,1^N,1^c)$ where $1^x$ means $1\cdots 1$ ($x$ times). 
\item
{\it Output} YES if there is a total $c$-coloring of 
$\GNM$, NO otherwise.
\end{itemize}
$\GC$ stands for {\it Grid Coloring}.
\end{definition}

Clearly $\GC\in\NP$. 
Is this problem $\NP$-hard? Alas no (assuming $\P\ne\NP$). 

\begin{definition}
A set $X\subseteq \{0,1\}^*$ is {\it sparse} if
there exists a polynomial $p$ such that 

$$(\forall n)[|X \cap \{0,1\}^n | \le p(n)]$$
\end{definition}

Note that $\GC\subseteq 1^*\times 1^* \times 1^*$ and hence is a sparse set. 

\noindent
We state a theorem that indicates sparse sets are not $\NP$-hard. 

\begin{theorem}~
\begin{enumerate}
\item 
(Mahaney~\cite{Mahaney-1982}, see also \cite{HL-1994} for an alternative proof). 
If there exists a sparse set that is $\NP$-hard by an $m$-reduction then $\P=\NP$. 
\item
(Karp-Lipton Theorem~\cite{KL-1980})
If there exists a sparse set that is $\NP$-hard by a Turing-reductions then $\Sigma_2^p=\Pi_2^p$. 
\end{enumerate}
\end{theorem} 

Hence $\GC$ is likely to not be $\NP$-complete under either $m$-reductions or Turing-Reductions. 

If in the $\GC$ problem we express $N,M$ in binary, then we cannot show that $\GC$ is in $\NP$ 
since the obvious witness, the coloring,  is exponential in the length of the input. 
The formulation in binary does not get at 
the heart of the problem, since we believe it is hard because the number of possible colorings is 
large, not because $N,M$ are large.

\section{Fixed Parameter Tractability}\label{se:fpt}

Consider the problem where the number of colors is fixed at some $c$.
We will see that this problem is Fixed Parameter Tractable by presenting two
FPT algorithms for it.

The algorithms we present are not only FPT; they achieve this by means of a
{\it polynomial kernel}. We discuss this in a subsection after the algorithms.

How well does the algorithm do in practice? 
We discuss this in a second subsection after the algorithm. 
In particular we discuss how much time and space the algorithm takes on one of our 
motivating problems:
determining if there is a 4-coloring of $G_{17,17}$. 
The punchline will be that the algorithm takes too much time, and too much space, to be practical.
Even so, we present the algorithm in the hope that some clever reader can come up
with a way around these limitations, perhaps in practice if not in theory.

\begin{definition}
Let $c\in\N$.
$\GCE_c$ is the following problem:
\begin{itemize}
\item
{\it Input}  $N,M\ge 1$ and $\chi$ a partial $c$-coloring of $\GNM$.
The numbers $N,M$ are in unary. 
\item
{\it Output} YES if there is an extension of $\chi$ to a total $c$-coloring of 
$\GNM$, NO otherwise.
(The algorithm can easily be modified to also {\it output} the extension as well as the YES.) 
\end{itemize}
\end{definition}

Clearly $\GCE_c \in \DTIME(c^{O(NM)})$.
We will show that $\GCE_c$ is in time $O(N^2M^2)+2^{O(c^6+\log c)}$ and
then improve the algorithm to show that $\GCE_c$ is in time
$O(N^2M^2)+2^{O(c^4+\log c)}$.

\begin{lemma}\label{le:comb}
For all $u\ge 0$, $\sum_{s=0}^u \binom{u}{s}2^s = 3^u$.
\end{lemma}

\begin{proof}

$$3^u = (1+2)^u = \sum_{s=0}^u \binom{u}{s}2^s.$$

The last equality is by the binomial theorem. 
\end{proof}

\begin{lemma}\label{le:ext}~
Assume that $\GNM$ is partially $c$-colored by $\chi$.
Let $S$ be a set of cells that are not colored by $\chi$.
Let $|S|=s$.
Let $\chi^*$ be a (not necessarily proper) extension of $\chi$ that colors
all of the cells of $S$. We can determine whether $\chi^*$ is a proper $c$-coloring
in time $O(NMs)$.
\end{lemma}

\begin{proof}

\noindent
Here is the algorithm.

\noindent
\begin{enumerate}
\item[ ]
For each $(x,y)\in S$ do the following.
\begin{enumerate}
\item[ ]
For each $1\le x'\le N$ and $1\le y'\le M$ determine if

$$\chi^*(x,y)=\chi^*(x',y)=\chi^*(x,y')=\chi^*(x',y').$$

\noindent
If the equality ever holds then {\it output} NO and stop.
\end{enumerate}

\item[ ]
(If you get here then the equality never happened.)
{\it Output} YES and stop.
\end{enumerate}

The first for-loop goes $s$ iterations. The second for-loop
goes $NM$ iterations. The body of the for-loop is $O(1)$ time. Hence
the run time is $O(NMs)$.
\end{proof}

\begin{lemma}\label{le:dyn}
Let $N,M,c\in\N$.
Let $\chi$ be a partial $c$-coloring of $\GNM$.
Let $U$ be the uncolored grid cells. Let $|U|=u$.
There is an algorithm that takes $O(cuNM3^u)$ time and $2^uc$ space 
that will determine if $\chi$ can
be extended to a full $c$-coloring.
\end{lemma}

\begin{proof}
For $S\subseteq U$ and $0\le i\le c$ let
\begin{equation*}
f(S,i)=
\begin{cases}
\YES & \text{ if $\chi$ can be extended to color $S$ using only colors $\{1,\ldots,i\}$; } \\
\NO  & \text{ if not.} \\
\end{cases}
\end{equation*}

We assume throughout that the coloring $\chi$ has already been applied.

We are interested in $f(U,c)$; however, we use a dynamic program
to compute $f(S,i)$ for all $S\subseteq U$ and $0\le i\le c$.
Note the base cases:

\begin{enumerate}
\item
$f(\es,i)=\YES$.
\item
If $S\ne\es$ then $f(S,0)=\NO$.
\end{enumerate}

\noindent
{\bf Claim 1} Let $S\subseteq U$ and $1\le i\le c$.
Assume that, for all $S'$ such that $|S'|<|S|$,  for all $0\le i\le c$, $f(S',i)$ is known.
Also assume that $f(S,i-1)$ is known.
Let $|S|=s$. Then $f(S,i)$ can be determined in time $O(NMs2^s)$.

\noindent
{\bf Proof of Claim 1}

If $f(S,i-1)=\YES$ then clearly $f(S,i)=\YES$. If not then
here is our plan: We want to find (or show there is no such)
nonempty $T\subseteq S$ such that the following holds:
\begin{itemize}
\item
$f(S-T,i-1)=\YES$. 
Hence there is a proper extension of $\chi$ which uses colors $\{1,\ldots,i-1\}$ on $S-T$. 
Let $\chi^*$ be that coloring.
Note that, for all nonempty $T\subseteq S$ the value $f(S-T,i-1)$ is known since $|S-T|<|S|$.
\item
The extension of $\chi$ obtained by coloring all cells in $T$ with $i$ is a proper coloring.
\end{itemize}

If we find such a $T$ then clearly $f(S,i)=\YES$ by using $\chi^*$ to color $S-T$
with $\{1,\ldots,i-1\}$ and
then coloring all cells in $T$ with $i$.
The algorithm below tries to find such a $T$. It will be clear that if the
algorithm says $\YES$ then there is such a $T$ and hence $f(S,i)=\YES$.
We will need to prove that if the algorithm says $\NO$ then $f(S,i)=\NO$.

\begin{enumerate}
\item
If $f(S,i-1)=\YES$ then {\it output} $\YES$ and stop.
\item
For all nonempty $T\subseteq S$ do the following (Note that there are $2^{s}-1$ nonempty sets $T$.)
\begin{enumerate}
\item
Let $\chi'$ be the extension of $\chi$ that colors all cells in $T$ with $i$. 
\item
If $\chi'$ is not a proper coloring then go to the next $T$.
Note that, by Lemma~\ref{le:ext}, this takes $O(NM|T|)=O(NMs)$ time. 
\item
If $f(S-T,i-1)=\NO$ then go to the next $T$.
Note that we know the value of $f(S-T,i-1)$ because $|S-T|<|S|$.
This step takes $O(1)$ time. 
\item
If the algorithm got to this step then the following have happened:
\begin{enumerate}
\item 
$\chi'$ is proper.
\item
$f(S-T,i-1)=\YES$. 
Hence there is a proper extension of $\chi$ which uses colors $\{1,\ldots,i-1\}$ on $S-T$. 
Let $\chi^*$ be that coloring.
\end{enumerate}
The extensions $\chi'$ and $\chi^*$ can easily be combined to properly extend $\chi$
to $S$ with colors $\{1,\ldots,i\}$. Hence $f(S,i)=\YES$ and we have the coloring itself. 
\end{enumerate} 
\item
If the algorithm got to this step then no $T$ worked. We will show that in this case
$f(S,i)=\NO$.
\end{enumerate}

The algorithm just specified has $2^s$ iterations that take $O(NMs)$ each.
Hence the algorithm runs in time $O(NMs2^s)$.

Clearly if the above algorithm outputs $\YES$ then $f(S,i)=\YES$.
We need to show if the output is $\NO$ then $f(S,i)=\NO$.

\noindent
{\bf Claim 2:} If the above algorithm outputs $\NO$ then $f(S,i)=\NO$.

\noindent
{\bf Proof of Claim 2:}
If the above algorithm outputs $\NO$ then, for all nonempty
$T\subseteq S$ at least one of the following cases holds:

\noindent
{\bf Case 1:}
The extension of $\chi$ to $T$ formed by coloring cells of $T$ with $i$ is not proper.

\noindent
{\bf Case 2:} $f(S-T,i-1)=\NO$.

Assume, by way of contradiction, that $f(S,i)=\YES$.
Let $\COL$ be a proper extension of $\chi$ to $S$.
Let $T$ be the subset of $S$ that is colored $i$.

Since $\COL$ is a proper extension of $\chi$,
the extension of $\chi$ to $T$ formed by coloring cells in $T$ with $i$ is proper.
So Case 1 does not apply to $T$.

Since $\COL$ is a proper extension of $\chi$ there is a proper extension of $\chi$
to $S-T$ that only uses $\{1,\ldots,i-1\}$.
So Case 2 does not apply to $T$.

Neither case applies, which is a contradiction.

\noindent
{\bf End of Proof of Claim 2}

\noindent
{\bf End of Proof of Claim 1}

We use Claim 1 in the following dynamic program.

\begin{enumerate}
\item
{\it Input$(M,N,c,\chi)$} such that $\chi$ is a partial $c$-coloring of $\GNM$.
Let $U$ be the set of cells that are not colored by $\chi$.
Let $|U|=u$.
\item
Set up a 2 dimensional table indexed by $S\subseteq U$ and $0\le i\le c$.
\item
Set $f(\es,i)=\YES$.
\item
If $S\ne\es$ then set $f(S,0)=\NO$.
\item
For $S\subseteq U$ (go in order of size)
\begin{enumerate}
\item[ ]
For $i=1$ to $c$ determine $f(S,i)$ using Claim 1 which takes time $O(NMs2^s)$.
\end{enumerate}
\end{enumerate}

Note that the amount of time taken in the inner loop, $O(NMs2^s)$ is
independent of $c$. That is why $c$ will only appear linearly in the run time.

The number of subsets of $U$ that have $s$ cells is $\binom{u}{s}$. Hence the
total time spent in the loop is O-of the following:

$$\sum_{i=1}^c \sum_{s=0}^u \binom{u}{s} sNM2^s \le  cuNM\sum_{s=0}^u \binom{u}{s}2^s$$

By Lemma~\ref{le:comb}, $\sum_{s=0}^u \binom{u}{s}2^s=3^u$, so we obtain $O(cuNM3^u).$
\end{proof}

The following two theorems are easy; however, we include the proofs for completeness. 

\begin{lemma}\label{le:bounds}
Assume $c+1\le N$ and  $c\binom{c+1}{2} < M$.
Then $\GNM$ is not $c$-colorable.
Hence, for any $\chi$, $(N,M,\chi)\notin \GCE_c$.
\end{lemma}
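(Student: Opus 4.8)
The plan is a two-stage pigeonhole argument: one pigeonhole inside a single column, and one across columns. First I would reduce to the tight case. Since $N\ge c+1$, any $c$-coloring $\chi$ of $G_{N,M}$ restricts to a coloring of the subgrid on rows $\{1,\ldots,c+1\}$ and all $M$ columns, and any monochromatic rectangle in that subgrid is also one in $G_{N,M}$; moreover the restricted coloring still uses at most $c$ colors. So it suffices to show that $G_{c+1,M}$ is not $c$-colorable whenever $c\binom{c+1}{2}<M$.

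Next, suppose toward a contradiction that $\chi$ is a $c$-coloring of $G_{c+1,M}$ with no monochromatic rectangle. For each column $j\in[M]$, the map $i\mapsto\chi(i,j)$ sends the $c+1$ rows into $[c]$, so by pigeonhole there are two rows $a_j<b_j$ with $\chi(a_j,j)=\chi(b_j,j)$; fix one such pair and let $k_j$ be its common color, and record the ``witness'' $w(j)=(\{a_j,b_j\},k_j)\in\binom{[c+1]}{2}\times[c]$. The set $\binom{[c+1]}{2}\times[c]$ has exactly $c\binom{c+1}{2}$ elements, which is strictly smaller than $M$, so by pigeonhole two distinct columns $j\ne j'$ have $w(j)=w(j')$, i.e.\ the same unordered row pair $\{a,b\}$ and the same color $k$. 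Then $(a,j)$, $(b,j)$, $(a,j')$, $(b,j')$ form a rectangle with all four corners colored $k$, contradicting the assumption on $\chi$. Hence $G_{c+1,M}$, and therefore $G_{N,M}$, is not $c$-colorable, and the last sentence of the lemma follows immediately since a partial coloring can be extended to a $c$-coloring only when a $c$-coloring of $G_{N,M}$ exists.

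I do not expect a genuine obstacle here; the argument is elementary. The only two points that need a word of care are the reduction step (verifying that restricting to a subgrid neither invalidates the coloring nor introduces rectangles that were not present before) and the observation that the two colliding columns produced at the end are actually distinct — which is precisely what the strict inequality $c\binom{c+1}{2}<M$ provides.
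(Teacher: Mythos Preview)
Your proof is correct and is essentially the same two-stage pigeonhole argument the paper gives: first find a repeated color within each column (restricted to the first $c+1$ rows), record the pair-of-rows together with the color as a label in $\binom{[c+1]}{2}\times[c]$, then use $M>c\binom{c+1}{2}$ to find two columns with the same label, yielding a monochromatic rectangle. The only cosmetic difference is that the paper compresses your explicit reduction to $G_{c+1,M}$ into the phrase ``since every column has at least $c+1$ elements'' and picks the least such label rather than an arbitrary one.
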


\begin{proof}
Assume, by way of contradiction, that there is a $c$-coloring of $\GNM$.
Since every column has at least $c+1$ cells, each column must have two cells that have the
same color. 
Map every column to some $(\{i,j\},a)$ such that the $i$th and the $j$th entry in that column
are both colored $a$.
Since the number of $(\{i,j\},a)$ is 

$$\binom{N}{2}\times c \le \binom{c+1}{2}\times c< M,$$

\noindent
two columns must map to the same $(\{i,j\},a)$. This will create a monochromatic
rectangle, which is a contradiction. 
\end{proof}

\begin{lemma}\label{le:c}
Assume $N\le c$ and $M\in \N$. If $\chi$ is a partial $c$-coloring of $\GNM$
then $(N,M,\chi)\in \GCE_c$.
\end{lemma}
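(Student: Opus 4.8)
The plan is to extend $\chi$ to a full $c$-coloring by working one column at a time, independently, and giving every newly-coloured cell a colour that is fresh within its own column. Write ``column $j$'' for the set $\{(1,j),\dots,(N,j)\}$, $j\in[M]$ (so, as in Lemma~\ref{le:bounds}, every column has exactly $N$ cells). Fix $j$, let $D_j\subseteq[c]$ be the set of colours that $\chi$ uses on column $j$, and let $U_j$ be the set of uncoloured cells of column $j$. Each colour of $D_j$ is used on some $\chi$-coloured cell of column $j$, and those cells are disjoint from $U_j$, so $|D_j|+|U_j|\le N\le c$; hence $|U_j|\le c-|D_j|=\bigl|[c]\setminus D_j\bigr|$, and we may pick an injection $\iota_j\colon U_j\to[c]\setminus D_j$. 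Let $\chi'$ agree with $\chi$ on $\dom\chi$ and send each uncoloured cell $(i,j)$ to $\iota_j(i,j)$.

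Next I would check that $\chi'$ is a $c$-coloring. It is total with range in $[c]$, so it suffices to rule out a monochromatic rectangle. The key observation is that a rectangle meets exactly two columns, say $b$ and $b'$, with two corners (in two distinct rows) lying in column $b$ and two in column $b'$. Suppose $\chi'$ made such a rectangle monochromatic, all four corners coloured $\kappa$. By construction, within column $b$ the $\iota_b$-coloured cells receive pairwise distinct colours, all drawn from $[c]\setminus D_b$; so any colour occurring on an $\iota_b$-coloured cell occurs there and nowhere else in column $b$. Since $\kappa$ occurs on \emph{two} cells of column $b$, neither of those cells was coloured by $\iota_b$ — that is, both lie in $\dom\chi$. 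The same argument in column $b'$ puts its two corners in $\dom\chi$. Hence the entire rectangle lies in $\dom\chi$ and is monochromatic under $\chi$, which is impossible since $\chi$, being a partial $c$-coloring, has no monochromatic rectangle among its coloured cells. Thus $\chi'$ witnesses $(N,M,\chi)\in GCE_c$.

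The only real subtlety — and the point to be careful about — is that ``partial $c$-coloring'' does not forbid $\chi$ from painting two cells of one column the same colour, so one cannot simply demand that each column of $\chi'$ be rainbow. The remedy is to insist only that the \emph{added} colours be fresh (and mutually distinct) within each column; the count $|D_j|+|U_j|\le N\le c$ survives any such repetition inside $\chi$, and the ``two-corners-per-column'' observation then forces any monochromatic rectangle of $\chi'$ to avoid every added cell, so it would already have been present in $\chi$. Everything else is routine bookkeeping.
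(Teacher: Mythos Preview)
Your proof is correct and follows essentially the same approach as the paper: extend $\chi$ column by column, assigning the blank cells colours that are new within their column. The paper's own proof is a one-sentence sketch (``for each column use a different color for each blank spot, making sure that all of the new colors in that column are different from each other'') and omits both the counting justification and the verification that no monochromatic rectangle arises; your write-up supplies these details and, in particular, correctly handles the subtlety that $\chi$ may already repeat a colour within a column, which the paper's phrasing glosses over.
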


\begin{proof}
The partial $c$-coloring $\chi$ can be extended to a full $c$-coloring as follows:
for each column  use a different color for
each blank cell, making sure that all of the new colors
in that column are
different from each other and from the already existing colors given by $\chi$. 
\end{proof}

\begin{theorem}
$\GCE_c$ can be computed in time $O(N^2M^2)+2^{O(c^6+\log c)}$ and
space $O(c2^{c^6})$. 
\end{theorem}

\begin{proof}~

\begin{enumerate}
\item
{\it Input $(N,M,\chi)$}.

\item
If $N\le c$ or $M\le c$ then test if $\chi$ is
a partial $c$-coloring of $\GNM$. If so then {\it output} $\YES$. 
If not then {\it output} $\NO$.
(This works by Lemma~\ref{le:c}.)
This takes time $O(N^2M^2)$.
Henceforth we assume $c+1\le N,M$.

\item
If  $c\binom{c+1}{2} < M$ or $c\binom{c+1}{2} < N$ then {\it output} $\NO$ and stop.
(This works by Lemma~\ref{le:bounds}.)

\item
The only case left is $c+1\le N,M\le c\binom{c+1}{2}$.
We will apply Lemma~\ref{le:dyn}. Note that the number of uncolored cells,
$u$, is 

$$\le NM\le  (c\binom{c+1}{2})^2 \le (c \times \frac{(c+1)^2}{2})^2 = O(c^6).$$

Hence the run time of this step is

$$O(cuNM3^u)=O(cc^6c^6 3^{c^6}) = 2^{O(c^6+\log c)}.$$

\end{enumerate}

Step 2 takes $O(N^2M^2)$, and Step 4 takes time $2^{O(c^6+\log c)}$. Hence the
entire algorithm takes time $O(N^2M^2) + 2^{O(c^6+\log c)}$.
\end{proof}

Can we do better? Yes, but it will require a result from a paper
by Fenner et al. \cite[Corollary 2.12]{grid}.

\begin{lemma}\label{le:better}
Let $1\le c'\le c-1$.
\begin{enumerate}
\item
If $N\ge c+c'$ and $M > \frac{c}{c'}\binom{c+c'}{2}$ then $\GNM$ is not $c$-colorable.
\item
If $N\ge 2c$ and $M>2\binom{2c}{2}$ then $\GNM$ is not $c$-colorable.
(This follows from a weak version of the $c'=c-1$ case of Part I.)
\end{enumerate}
\end{lemma}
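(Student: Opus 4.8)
The plan is to push the double-counting in the proof of Lemma~\ref{le:bounds} a little further. I would suppose toward a contradiction that $\chi$ is a $c$-coloring of $G_{N,M}$ with $N\ge c+c'$ and $M>\frac{c}{c'}\binom{c+c'}{2}$, and then restrict attention to the first $c+c'$ rows. In any single column these $c+c'$ cells are colored from $[c]$; if the color-class sizes in that column are $n_1,\dots,n_c$ (so $\sum_i n_i=c+c'$), the number of monochromatic row-pairs in the column is $\sum_{i=1}^{c}\binom{n_i}{2}$, and by convexity of $t\mapsto\binom{t}{2}$ this is minimized at the most balanced distribution, namely $c'$ classes of size $2$ and $c-c'$ of size $1$ (which is legal since $c'\le c-1<c$), giving the value $c'$. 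So every column, restricted to its first $c+c'$ entries, contributes at least $c'$ monochromatic pairs $\{i,j\}\in\binom{[c+c']}{2}$.

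Next I would attach to each monochromatic pair $\{i,j\}$ arising in a column $k$ the label $(\{i,j\},a)$, where $a$ is the common color of $(i,k)$ and $(j,k)$. Within a fixed column the (at least $c'$) labels so produced are pairwise distinct, since the pair $\{i,j\}$ already determines $a$; and across columns the labels are again pairwise distinct, because a label common to columns $k_1\ne k_2$ is precisely a monochromatic rectangle on rows $i,j$ and columns $k_1,k_2$, which a $c$-coloring forbids. Hence the total number of labels used is at least $Mc'$, while every label lies in the set $\binom{[c+c']}{2}\times[c]$, which has size $c\binom{c+c'}{2}$. This forces $Mc'\le c\binom{c+c'}{2}$, contradicting the hypothesis on $M$. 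That establishes Part~1, and the final assertion of the lemma is then immediate, since a grid that is not $c$-colorable admits no partial $c$-coloring that extends to a full one.

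For Part~2 I would simply specialize Part~1 to $c'=c-1$ (the case $c=1$ being trivial). Then $c+c'=2c-1\le 2c\le N$, and a one-line computation gives $\frac{c}{c-1}\binom{2c-1}{2}=\binom{2c}{2}<2\binom{2c}{2}<M$, so the hypotheses of Part~1 hold. This is the weak version alluded to in the statement: Part~2 asks for slightly larger $N$ and $M$ than the $c'=c-1$ instance of Part~1 strictly requires, in exchange for a cleaner-looking bound.

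I do not expect a real obstacle here: the whole argument is a strengthening of the proof of Lemma~\ref{le:bounds}, and the only genuinely new ingredient is the convexity estimate that $c+c'$ cells $c$-colored in a column force at least $c'$ monochromatic row-pairs. The single point that needs a sentence of care is the distinctness of labels — within a column (a pair fixes its color) and across columns (a repeated label is exactly a monochromatic rectangle) — since that is what converts the ``$\ge c'$ per column'' count into the global bound $Mc'\le c\binom{c+c'}{2}$.
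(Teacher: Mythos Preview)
Your argument is correct. The convexity estimate giving at least $c'$ monochromatic row-pairs per column is the right strengthening of Lemma~\ref{le:bounds}, and your distinctness-of-labels argument (within a column because the pair determines its color; across columns because a repeated label is a monochromatic rectangle) cleanly yields $Mc'\le c\binom{c+c'}{2}$. Your computation $\frac{c}{c-1}\binom{2c-1}{2}=\binom{2c}{2}$ for Part~2 is also correct.

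One remark on the comparison you asked about: the paper does not actually prove this lemma. It is imported wholesale from~\cite{grid} (see the sentence ``it will require a result from~\cite{grid}'' immediately preceding the lemma). So there is no in-paper proof to match against; you have supplied what the paper omits. Your argument is, in spirit, exactly the double-counting one would expect and is presumably close to what appears in the cited source.

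A small quibble: you refer to a ``final assertion of the lemma'' after Part~1, apparently about partial colorings not extending. There is no such assertion in Lemma~\ref{le:better}; you may be thinking of the analogous sentence in Lemma~\ref{le:bounds}. This does not affect the correctness of your proof, but you should drop that clause.
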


\begin{theorem}\label{th:fpt4}
$\GCE_c$ can be computed in time $O(N^2M^2)+2^{O(c^4+\log c)}$ and 
$O(c2^{NM})$ space. 
\end{theorem}

\begin{proof}~

\begin{enumerate}

\item
{\it Input $(N,M,\chi)$}.  Let $u=NM$ which is a bound on the number of cells that are not
colored.

\item
If $N\le c$ or $M\le c$ then test if $\chi$ is
a partial $c$-coloring of $\GNM$. If so then {\it output} $\YES$. 
If not then {\it output} $\NO$.
(This works by Lemma~\ref{le:c}.)
This takes time $O(N^2M^2)$.

\item
Let $c'=N-c$ and $c''=M-c$. 

\item
If $c'\le c-1$ then do the following. Note that $N=c+c'$ and $M=c+c''$. 
\begin{enumerate}
\item
If $M> \frac{c}{c'}\binom{c+c'}{2}$, then {\it output} $\NO$ and stop.
(This works by Lemma~\ref{le:better}.)
\item
If $M\le \frac{c}{c'}\binom{c+c'}{2}$ then do the following.
By Lemma~\ref{le:dyn}
we can determine if $\chi$ can be extended to a total $c$-coloring in time
$O(cuNM3^u)$.
Since $c\le N$ and $u\le NM$ we have 
$$O(cuNM3^u)=O(N NM NM 3^{NM})\le O(N^3M^23^{NM})\le 2^{O(NM+\log(NM))}.$$
Note that $NM\le (c+c')\frac{c}{c'}\binom{c+c'}{2}$.
On the interval $1\le c'\le c-1$ the
function 
$(c+c')\frac{c}{c'}\binom{c+c'}{2}$
achieves its maximum when $c'=1$, where it is 

$(c+1)c\binom{c+1}{2} \le O(c^4)$.
Hence $O(NM+\log(NM))\le O(c^4+\log c)$. 
Therefore the runtime is bounded by 
$2^{O(c^4+\log c)}$.
The space is $O(c2^{NM})$.
\item
If $N> \frac{c}{c''}\binom{c+c''}{2}$, or
$N\le \frac{c}{c''}\binom{c+c''}{2}$, then proceed
similar to the last two steps.
\end{enumerate}
\item
(This is not code. This is commentary.)
We have taken care of the cases where

$N\le c$

$N=c+1$ (this is the $c'=1$ case)

$N=c+2$ (this is the $c'=2$ case)

$\vdots$

$N=c+c-1$ (This is the $c'=c-1$ case).

Hence we have taken care of all cases where $N\le 2c-1$.
Similarly, we have taken care of all cases where $M\le 2c-1$.
Henceforth we assume $2c\le N,M$.

\item
If  $M>2\binom{2c}{2}$ or $N>2\binom{2c}{2}$
then {\it output} $\NO$ and stop.
(This works by Lemma~\ref{le:better}.)

\item
The only case left is $2c\le N,M\le 2\binom{2c}{2}=O(c^2)$.
By Lemma~\ref{le:dyn} we can determine if $\chi$ can be extended
in time $O(cuNM3^u)$. Since $u=NM=O(c^4)$ we have time 

$$O(c(NM)^2 3^u) = O(c c^8 3^{c^4}) = 2^{O(c^4 + log c)}.$$

\end{enumerate}

Step 2 and Step 4 together take time $O(N^2M^2) + 2^{O(c^4+\log c)}$.
\end{proof}

\subsection{Polynomial Kernels}

\begin{definition}
Let $A$ be a set in FPT, with parameter $c$.
$A$ has a {\it polynomial kernel} if there is a polynomial time (in the length
of the input) algorithm 
that takes input $P$ and
produces a
new problem $P'$ such that
\begin{enumerate}
\item
$P\in A$ iff $P'\in A$.
\item
The size of $P'$ is bounded by a function of $c$.
\end{enumerate}
\end{definition}

Our algorithms for $\GCE_c$ took the input and either solved the problem easily
or concluded that the problem had size bounded by a polynomial in $c$. 
Hence our algorithms showed that $\GCE_c$ has a polynomial kernel. 

\subsection{Time and Space in the Real World}\label{se:real}

We have shown that $\GCE_c$ can be computed in time $O(N^2M^2)+2^{O(c^4+\log c)}$.
If the partially-filled grid has $u$ empty spaces then the space is $O(c\times 2^u)$. 
Hence if the algorithm is run on the empty grid, so $u=NM$, the space is
$O(c2^{NM})$. 

We now look at what happens if $N=M=17$, $c=4$, and we start with the empty grid.

\noindent
{\bf Time} Even for small $c$ the additive term $2^{O(c^4+\log c)}$ is the real time-sink.
We generously assume the O-of term has constant 1 to get that the time is
$2^{4^4+\log 4} = 2^{258}\sim 10^{77}$. We generously assume that every step takes
one nano-second. Note that one nanosecond 
is $10^{-9}$ seconds.
Hence the time is $10^{77}\times 10^{-9} = 10^{68}$ seconds. This is over $2^{200}$ years. 

\smallskip

\noindent
{\bf Space} 
We generously assume the O-of term has constant 1 to get that the space is
$4\times 2^{17\times 17}=4\times 2^{289}$. This is roughly $10^{87}$ which is
larger than Eddington's estimate of the number of protons in the universe ($10^{80}$). 

A cleverer algorithm that reduces the time or space is desirable.
By Theorem~\ref{th:npc}
the time cannot be made polynomial unless $\penp$.

\section{Open Problems}\label{se:open}

We reiterate briefly the open problems stated in Section~\ref{se:no}
and add some new ones.

\begin{enumerate}
\item
The problem we really want to study is the grid coloring extension problems
with the empty grid. As noted in Section~\ref{se:no} this problem is
a sparse set, and such sets cannot be $\NP$-complete (unless $\penp$).
What is needed is a framework for proving that some sparse sets are likely not in $\P$. 

\item
In Theorem~\ref{th:fpt4} we showed that $\GCE_c$ can be computed in time $O(N^2M^2)+2^{O(c^4+\log c)}$.
Can this be improved? The last term cannot be a polynomial in $c$ (unless $\penp$);
however, it is plausible that a smaller exponential will suffice.
Is there a proof that, under assumptions,
the exponent cannot be lowered?
What is needed is a framework to prove some FPT problems are hard.

\item
Even without a theoretical improvement to our FPT algorithms, are there
heuristics one can use to speed them up in practice?

\item
We have studied grid colorings that avoid monochromatic {\it rectangles}.
One can study avoiding monochromatic {\it squares}.
The following is known:
\begin{enumerate}
\item
By a corollary to the Gallai-Witt theorem (itself a generalization of van der Waerden's theorem):
for all $c$ there exists $N=N(c)$ such that, for all $c$-coloring of $\GNN$, there is a
monochromatic square (all corners the same color). The proof gives an enormous upper bound 
even for $N(2)$; however, in reality $N(2)$ may be smaller, as we will see in the next few points.
\item
Bacher and Eliahou~\cite{BE-2010} showed the following:
\begin{enumerate}
\item
There is a 2-coloring of $G_{13,\infty}$ that has no monochromatic squares.
\item
There is a 2-coloring of $G_{14,14}$ that has no monochromatic squares.
\item
For any 2-coloring of $G_{14,15}$, there is a monochromatic square.
\item
Hence the obstruction set is $\{ G_{14,15}, G_{15,14} \}$.
\end{enumerate}
\end{enumerate}

With this in mind, we pose the following open question: is the following set $\NP$-complete:
$$
\{
(N,M,c,\chi)\st\chi
\hbox{ is extendable to a $c$-coloring of $\GNM$ with no monochromatic squares}
\}
.$$

\item
One can also look at other shapes to avoid have monochromatic. 
\end{enumerate}

\section{Acknowledgements}
We thank Amy Apon, Doug Chen, Jacob Gilbert, Matt Kovacs-Deak, Stasys Junka, Jon Katz,
Clyde Kruskal, Nathan Hayes, Erika Melder, Erik Metz, 
and Rishab Pallepati for proofreading and discussion.

We thank Wing Ning Li for pointing out that the case of
$N,M$ binary, while it seems to not be in $\NP$, is actually unknown.

We thank Jacob Gilbert, David Harris, and Daniel Marx for pointing out many improvements in the
fixed parameter algorithm which we subsequently used.

We thank
Tucker Bane,
Richard Chang,
Peter Fontana,
David Harris,
Jared Marx-Kuo,
Jessica Shi,
and
Marius Zimand,
for listening to Bill present these results and hence clarifying them.

We thank the referee for {\it many} helpful comments including a complete
reworking of the proof of Theorem~\ref{th:npc}.


%


%
%

\end{document}